\documentclass[conference]{IEEEtran}
\IEEEoverridecommandlockouts
\usepackage{cite}
\usepackage{amsthm}
\newtheorem{Theorem}{Theorem}
\usepackage{amsmath,amssymb,amsfonts}
\usepackage{algorithmic}
\usepackage{graphicx}
\usepackage{textcomp}
\usepackage{algorithm}
\usepackage{xcolor}

\newcommand\CH{\ensuremath{\bm{\mathcal{H}}}}   

\newcommand\Ac{\ensuremath{{\mathcal{A}}}}

\newcommand\Kc{\ensuremath{{\mathcal{K}}}}
\newcommand\Mc{\ensuremath{{\mathcal{M}}}}
\newcommand\Uc{\ensuremath{{\mathcal{U}}}}

\newcommand\wb{\ensuremath{{\bf w}}}

\newcommand\ub{\ensuremath{{\bf u}}}

\newcommand\Ib{\ensuremath{{\bf I}}}

\newcommand\Vb{\ensuremath{{\bf V}}}

\graphicspath{{figures/}}

\definecolor{applegreen}{rgb}{0.55, 0.71, 0.0}

\newtheorem{assumption}{Assumption}

	\definecolor{black}{rgb}{0.0, 0.0, 0.0}
	\definecolor{white}{rgb}{1.0, 1.0, 1.0}
	\definecolor{yellow}{rgb}{1.0, 1.0, 0.8}
	\definecolor{red}{rgb}{0.6, 0.0, 0.2}
	\definecolor{blue}{rgb}{0.0, 0.2, 0.5}
	\definecolor{green}{rgb}{0.6, 0.8, 0.8}
	\definecolor{dark_green}{RGB} {0, 140, 0}
	\definecolor{gold}{rgb}{0.6, 0.4, 0.1}
	\definecolor{grey}{RGB}{0,0,0}
	\definecolor{Gray}{gray}{0.8}
	\definecolor{MediumGray}{gray}{0.9}
	\definecolor{LightGray}{gray}{0.98}
	\definecolor{LightCyan}{rgb}{0.88,1,1}
	\definecolor{purple}{RGB}{128,0,128}
	\definecolor{sl_blue}{RGB}{47, 60, 105}
	\definecolor{orange}{RGB}{255,165,0}
	\definecolor{Gray}{gray}{0.85}
	
	\newif\ifcomments
	\commentstrue

	\ifcomments
	\newcommand{\JUN}[1]{\textcolor{orange}{\textbf{\small [}\colorbox{yellow}{\textbf{Jun:}}{\small #1}\textbf{\small ]}}}
	\newcommand{\MB}[1]{\textcolor{dark_green}{\textbf{\small [}\colorbox{yellow}{\textbf{Mauro:}}{\small #1}\textbf{\small ]}}}
	\newcommand{\BT}[1]{\textcolor{brown}{{#1}}}
	\newcommand{\TODO}[1]{\textcolor{red}{{TODO: #1}}}
	\else
	\newcommand{\JUN}[1]{}
	\newcommand{\MB}[1]{}
	\newcommand{\BT}[1]{}
	\newcommand{\TODO}[1]{}
	\newcommand{\CH}[1]{}
	\fi
\def\BibTeX{{\rm B\kern-.05em{\sc i\kern-.025em b}\kern-.08em
    T\kern-.1667em\lower.7ex\hbox{E}\kern-.125emX}}
\begin{document}

\title{QoS-Constrained Scheduling in Multi-Cell Multi-User MIMO Networks\\
\thanks{The work is supported by National Key R\&D Program of China under grant 2024YFA1014204, by Guangdong Provincial Key Laboratory of Big Data Computing and by the NSFC, China, under Grant 62501513.}
}

\author{
\IEEEauthorblockN{
Tenghao Cai\textsuperscript{1,3},
Lei Li\textsuperscript{2,3},
and Tsung-Hui Chang\textsuperscript{2,3}
}
\IEEEauthorblockA{
\textsuperscript{1}\textit{School of Science and Engineering, The Chinese University of Hong Kong, Shenzhen} \\
\textsuperscript{2}\textit{School of Artificial Intelligence, The Chinese University of Hong Kong, Shenzhen} \\
\textsuperscript{3}\textit{Shenzhen Research Institute of Big Data} \\
Shenzhen, China \\
221019048@link.cuhk.edu.cn, lei.ap@outlook.com, changtsunghui@cuhk.edu.cn
}
}

\maketitle

\begin{abstract}
    In 5G and beyond networks, efficient scheduling is essential to exploit the gains of multi-user MIMO (MU-MIMO) equipped with carrier aggregation and joint transmission (JT). However, cross-cell and cross-carrier scheduling under QoS constraints is challenging due to the strong coupling across users, base stations, and carriers. In this work, we address this problem in multi-cell MU-MIMO networks to maximize system throughput for both JT and non-JT users under rate constraints. The optimization is highly complex as scheduling variables and beamforming (BF) vectors are intertwined. To tackle it, we propose an approximate but tractable surrogate by leveraging the eigen-based zero-forcing BF  and massive MIMO asymptotics. The reformulated problem has a separable structure and is amenable to efficient solutions by a penalty-based block coordinate descent method. Simulations demonstrate that the proposed scheduler not only meets the QoS requirements well but also achieves remarkable throughput gains over existing schemes.
\end{abstract}

\begin{IEEEkeywords}
5G, multi-cell MU-MIMO, joint transmission, scheduling, QoS constraints.
\end{IEEEkeywords}

\section{Introduction}
     To meet the rapid diversification of mobile services and the escalating demand for quality of service (QoS), multi-user multiple-input multiple-output (MU-MIMO), carrier aggregation (CA), and joint transmission (JT) have become key enabling technologies for 5G and beyond. MU-MIMO allows base stations (BSs) to simultaneously serve multiple user equipments (UEs) within the same time–frequency resources via spatial multiplexing; CA enhances capacity by enabling UEs to access multiple component carriers (CCs) at different center frequencies, each divided into resource block groups (RBGs) for fine-grained spectrum allocation; and JT improves signal quality and coverage for cell-edge UEs through coherent transmission from multiple BSs. While these techniques significantly improve network performance, fully realizing their potential requires efficient scheduling strategies. In particular, cross-cell and cross-carrier scheduling in MU-MIMO networks with JT is highly challenging due to the strong coupling of variables across CCs, UEs and BSs.

    Multi-cell MIMO scheduling has been extensively studied in the literature. Leveraging fractional programming and the Hungarian method, a multi-band scheduling approach was proposed in \cite{khan2020optimizing}, and  GPU-based schedulers were devised in \cite{chen2022m, 9488684} to achieve real-time processing. Although effective, these designs overlooked user QoS requirements, which, if incorporated, introduce highly non-convex constraints and significantly complicate the problem.
    In view of this, \cite{yu2012coordinated} proposed a branch-and-bound-based algorithm for subchannel scheduling and beamforming (BF) optimization under QoS constraints to maximize the number of scheduled UEs. Besides, \cite{he2022cross} developed an iterative algorithm alternating between BF optimization and user scheduling to maximize the sum rate. However, these approaches require iterative BF by solving either a second-order cone program \cite{yu2012coordinated} or a minimum mean squared error (MSE) problem \cite{he2022cross}. The resulting complexity remains high and unaffordable to existing systems. 
    
    In industrial practice, structured BF strategies, like zero-forcing (ZF) \cite{li2015multicell} or eigen-based ZF (EZF) \cite{zhao2023communication} are more appealing, as they can be obtained in closed-form at much lower complexity once scheduling decisions are made. Unfortunately, even with ZF/EZF-BF, the scheduling design remains challenging due to their strong coupling. To handle this, several heuristic schemes have been proposed \cite{yoo2006optimality, idrees2022throughput}. A semi-orthogonal user selection (SUS) with ZF was devised in \cite{yoo2006optimality}, where each BS greedily selects users with high channel gains and nearly orthogonal channels. However, this approach fails to account for inter-cell interference (ICI), leading to degraded performance in multi-cell scenarios. Applying multi-cell cooperative ZF for cell-edge users, the SHS scheduler \cite{idrees2022throughput} sequentially selects one user per resource block (RB) and dynamically excludes users that have already satisfied their QoS requirements. Nevertheless, limiting each RB to a single user greatly reduces the resource utilization efficiency.

    In this work, we investigate the cross-cell and cross-carrier scheduling in MU-MIMO networks with both JT and non-JT (NJT) UEs. Applying the practical EZF-BF, we formulate a scheduling optimization problem to guarantee the QoS of specific users while maximizing the sum rate of the others. Since the values of BF rely on the scheduling variables during optimization, the strong variable coupling makes the problem complex to solve. To handle it, we first propose a novel reformulation scheme that exploits EZF and massive-MIMO asymptotic properties to eliminate the intermediate BF variables, attaining simplified rate expressions with a separable structure. Then, we introduce a penalty-based block coordinate descent (BCD) algorithm to handle the surrogate problem efficiently. Extensive results demonstrate that the proposed scheduling algorithm not only ensures QoS satisfaction but also achieves significant throughput gains over existing schedulers.

\section{System Model \& Problem Formulation}

    \begin{figure}[t] 
        \centering	
        {\includegraphics[width=0.48\textwidth]{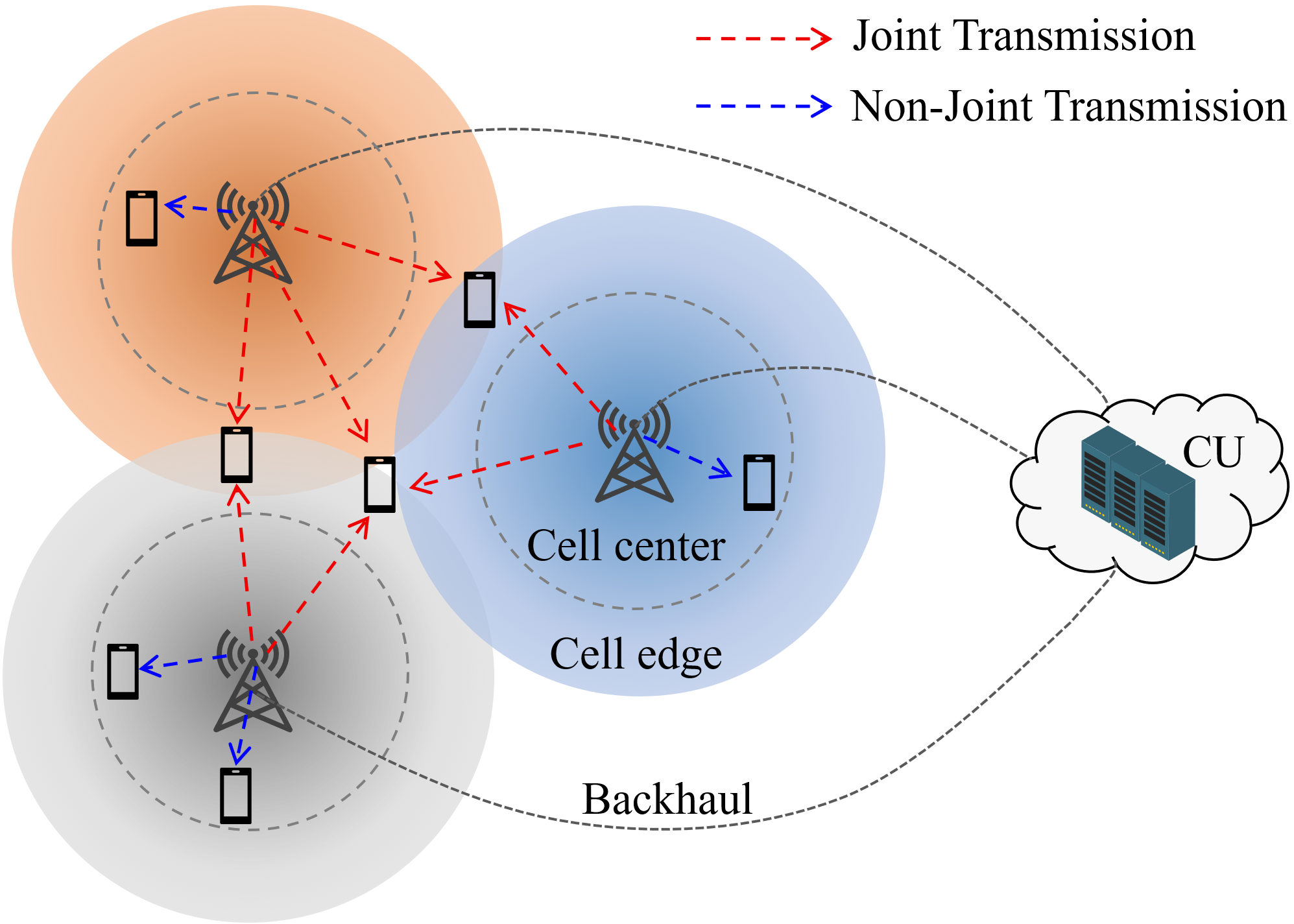}}
        \caption{System model.} 
        \label{fig:sys} 
    \end{figure}
\subsection{System Model}
     As shown in Fig.~\ref{fig:sys}, we consider a wireless network including $M$ cells and $K$ users, where each cell center deploys a BS equipped with $N_t$ antennas, and all BSs are connected to a central unit (CU). Each UE is equipped with $N_r$ antennas. Denote $\Mc$ and $\Kc$ as the sets of BSs and UEs, respectively. The set of UEs associated with BS $m$ is denoted as $\Ac_m$, which is assumed to be known and partitioned as $\Ac_m = \{\Kc_m, \Uc_m\}$, with $\Uc_m$ being the set of cell-edge UEs experiencing strong ICI and $\Kc_m$ being that of cell-center UEs mainly affected by intra-cell interference. In the network, neighbouring BSs employ JT to serve the cell-edge users (termed JT-UEs) to improve their performance, while cell-center UEs are served solely by their own BSs and are termed NJT-UEs. Denote the set of serving BSs for UE $k$ as $\mathcal{B}_k$. 
    The overall frequency spectrum of the system spans over $C$ CCs indexed by the set $\mathcal{C}$, and each CC consists of $R$ RBGs indexed by the set $\mathcal{R}$, so that each RBG is identified by a pair $(c,r)$ with $c \in \mathcal{C}$  and $r \in \mathcal{R}$. Denote $b_{k}^{c,r}$ as the scheduling variable indicating whether UE $k \in \mathcal{A}_m$ is allocated to RBG $r$ of the $c$-th CC, and
    \begin{align}
        b_{k}^{c,r}= \begin{cases}1, & \text{if UE $k$ is scheduled      on RBG $(c,r)$,} \\ 0, & \text{otherwise.}\end{cases}
    \end{align}
    
    Assume that channel state information (CSI) $\mathbf{H}_{m,k}^{c,r} \in \mathbb{C}^{N_r \times N_t}$ from BS $m$ to UE $k$ on RBG $(c,r)$ is available.  The transmit data symbol for UE $k$ satisfies $\mathbb{E}[|x_k|^2] = 1$ and is assumed to be mutually independent across different UEs. After applying the BF $\wb^{c,r}_{m,k}$ and the receive combiner $\ub_k^{c,r}$, the received signal of UE $k$ on RBG $(c,r)$ is written as  
    \begin{equation} \label{eq:sig_rx_NJT}
    \begin{aligned} 
        {y}_{k}^{c,r} 
         = &\sum\nolimits_{m \in \mathcal{B}_k} b_{k}^{c,r} (\mathbf{u}^{c,r}_{k})^\mathrm{H} \mathbf{H}_{m,k}^{c,r} \mathbf{w}_{m,k}^{c,r}x_k  \\
        &+  \sum\nolimits_{ j \neq k}  (\mathbf{u}^{c,r}_{k})^\mathrm{H}  \sum\nolimits_{\ell \in \mathcal{B}_j} b_{j}^{c,r}\mathbf{H}_{\ell,k}^{c,r} \mathbf{w}_{\ell,j}^{c,r}x_j\\
        &+ (\mathbf{u}^{c,r}_{k})^\mathrm{H}\mathbf{n}_k, 
    \end{aligned}
    \end{equation}
    where $\mathbf{n}_k \sim \mathcal{CN}(0, \sigma^2 \Ib_{N_r})$ is the additive complex Gaussian noise.  The first terms in \eqref{eq:sig_rx_NJT} represent the desired signal for UE $k$, and the second term represents the interference from co-scheduled UEs. 
    Then, the signal-to-interference-plus-noise-ratio (SINR) of UE $k$ on RBG $(c,r)$ is
    \begin{equation} \label{gamma}
    \begin{aligned}
        \gamma_{k}^{c,r}
        = &  \left| \sum\nolimits_{m \in \mathcal{B}_k} b_{k}^{c,r} (\mathbf{u}^{c,r}_{k})^\mathrm{H} \mathbf{H}_{m,k}^{c,r} \mathbf{w}_{m,k}^{c,r} \right|^2  \\
        & \!\!\!\!\!\! \times   \left(  \sum\nolimits_{ j \neq k}  \big|  (\mathbf{u}^{c,r}_{k})^\mathrm{H}  \sum\nolimits_{\ell \in \mathcal{B}_j} b_{j}^{c,r}\mathbf{H}_{\ell,k}^{c,r} \mathbf{w}_{\ell,j}^{c,r} \big|^2 
         + \sigma^2 \right)^{-1}, \notag
    \end{aligned}  
    \end{equation}
    Accordingly, the achievable data rate is
    \begin{align}\label{realrate}
        f_{k}^{c,r} = \log(1 + b_{k}^{c,r}\gamma_{k}^{c,r}), \forall c\in \mathcal{C}, r\in \mathcal{R}, k\in \mathcal{K}.
    \end{align}
    
\subsection{Transceiver Design}
    Given the computational constraints and real-time processing requirements, BF algorithms with closed-form/semi-closed-form solutions are strongly preferred over iterative ones in industrial practice. This motivates us to formulate the scheduling problem in this work using EZF-BF. Compared to standard ZF-BF, EZF-BF utilizes eigen-decomposition to operate on dominant channel eigenmodes, improving the robustness in scenarios with ill-conditioned channels and achieving more stable performance.
    
    Specifically, for each
    NJT-UE $k \in \mathcal{K}_m$, given $\mathbf{H}_{m,k}^{c,r}$, BS $m$ performs SVD 
    \begin{align}
         \mathbf{H}_{m,k}^{c,r}=\mathbf{T}_{k}^{c,r}\mathbf{\Lambda}_{k}^{c,r}(\mathbf{V}_{m,k}^{c,r})^\mathrm{H},
    \end{align}
    where the diagonal $\mathbf{\Lambda}_{k}^{c,r}$ contains the singular values in descending order (largest $\lambda_{k}^{c,r}$), and $\mathbf{T}_{k}^{c,r}$, $\mathbf{V}_{m,k}^{c,r}$ contain the left and right singular vectors, respectively. 
    In contrast, for JT-UE $i  \in \mathcal{U}_m$ served by multiple BSs, the transceiver is jointly designed based on all channels from its serving BSs.  
    Specifically, by stacking all the channel matrices from the serving BSs $m \in \mathcal{B}_i$ to JT-UE $i$, its aggregated channel matrix on RBG $(c,r)$ is 
    \begin{align} \label{eq:SVd_JT}
        \mathbf{H}_{i}^{c,r} \triangleq   \left[ \mathbf{H}_{m,i}^{c,r} \right]_{m \in \mathcal{B}_i}  \in \mathbb{C}^{N_r \times |\mathcal{B}_i|N_t }, 
    \end{align}
    whose SVD is
    \begin{align}\label{JT_Right}
    \mathbf{H}_{i}^{c,r} = \mathbf{T}_{i}^{c,r} \mathbf{\Sigma}_{i}^{c,r} (\mathbf{V}_{i}^{c,r})^\mathrm{H},
    \end{align}
    where $\mathbf{V}_{i}^{c,r}\in \mathbb{C}^{|\mathcal{B}_i|N_t \times |\mathcal{B}_i|N_t}$ is the unitary right singular vector matrix. $\mathbf{\Sigma}_{i}^{c,r} \in \mathbb{C}^{N_r \times |\mathcal{B}_i|N_t}$ holds the singular values in descending order, with its largest one $\lambda_{i}^{c,r}$. 
    It can be found from \eqref{eq:SVd_JT} that $\{\mathbf{H}_{m,i}^{c,r}\}_{m \in \mathcal{B}_i}$ share the same left singular vectors and singular values, i.e., 
    \begin{align}
        \mathbf{H}_{m,i}^{c,r} = \mathbf{T}_i^{c,r} \mathbf{\Sigma}_i^{c,r} (\mathbf{V}_{m,i}^{c,r})^\mathrm{H},
    \end{align} 
    where $\mathbf{V}_{m,i}^{c,r}$ is the corresponding matrices that can be attained from $\Vb_i^{c,r}$ in \eqref{JT_Right}.
    
    Let $\mathbf{t}_{k}^{c,r}$ and $\mathbf{v}_{m,k}^{c,r}$ be the first column of $\mathbf{T}_{k}^{c,r}$ and $\mathbf{V}_{m,k}^{c,r}$, respectively. To maximize the receive signal-to-noise-ratio (SNR), we set $\mathbf{u}_{k}^{c,r} = \mathbf{t}_{k}^{c,r}$ and it leads to
    \begin{align}\label{uH}   (\mathbf{u}_{k}^{c,r})^{\mathrm{H}}\mathbf{H}_{m,k}^{c,r} = \lambda_k^{c,r} (\mathbf{v}_{m,k}^{c,r})^{\mathrm{H}}, \quad \forall c,r,m,k.
    \end{align} 
    
    Denote NJT-UEs in cell $m$ scheduled on RBG $(c,r)$ as $\mathcal{K}_m^{c,r}\! \triangleq \! \{k_1,\dots,k_m\} \! \subseteq \! \mathcal{K}_m$, and JT-UEs scheduled on the same RBG as $\mathcal{U}_m^{c,r} \! \triangleq \! \{i_1,\dots,i_m\} \! \subseteq  \! \mathcal{U}_m$. By stacking their right singular vectors into $\hat{\mathbf{V}}_m^{c,r} \! \triangleq \! [{\mathbf{v}}_{m,k_1}^{c,r}, \dots, {\mathbf{v}}_{m,i_m}^{c,r}]$, the EZF-BF at BS $m$ for RBG $(c,r)$ is constructed as
    \begin{align}\label{bF}
        \hat{\mathbf{W}}_m^{c,r} = &\hat{\mathbf{V}}_m^{c,r} ( (\hat{\mathbf{V}}_m ^{c,r})^\mathrm{H} \hat{\mathbf{V}}_m^{c,r} )^{-1} 
        =  [ \hat{\mathbf{w}}_{m,k_1}^{c,r}, \ldots, \hat{\mathbf{w}}_{m,i_{m}}^{c,r}  ].
    \end{align} 
     Denote the power budget of BS $m$ per RBG as $P_m$, and 
     \begin{align}
         A_m^{c,r} \triangleq \sum\nolimits_{z \in \mathcal{A}_m} b_{z}^{c,r}
     \end{align} 
     as the number of UEs scheduled on RBG $(c,r)$, the EZF-BF for UE $j \in (\mathcal{K}_m^{c,r} \cup \mathcal{U}_m^{c,r})$ under In view of the industrial practice, let us formulate the scheduling problem under EZF-BF \cite{sun2010eigen} and equal power allocation is given by
     \begin{equation}\label{normalized}
        \mathbf{w}_{m,j}^{c,r} = \sqrt{{P_m}/{A_m^{c,r}} }\hat{\mathbf{w}}_{m,j}^{c,r}/\| \hat{\mathbf{w}}_{m,j}^{c,r} \|. 
     \end{equation}
\subsection{Problem Formulation}
    In practice, UEs of the network request diverse services and have different QoS requirements. Without loss of generality (w.l.o.g), we assume that UEs in the subset $\mathcal{Q} \subseteq \mathcal{K}$ have explicit rate constraints, while the set of the other UEs is denoted as $\hat{\mathcal{Q}} \triangleq \mathcal{K} \setminus \mathcal{Q}$. In this work, the scheduling is optimized to maximize the sum rate of the UEs $k \in \hat{\mathcal{Q}}$ while ensuring the QoS requirements of the UEs $j \in \mathcal{Q}$. Accordingly, the user scheduling optimization can be formulated as
    \begin{subequations}\label{SRmax-original}
        \begin{align}
            \max_{\{b^{c,r}_{k}\}}  & \sum\nolimits_{(c,r)} \sum\nolimits_{k \in \mathcal{\hat{Q}}}f_{ k}^{c,r} \label{SRmax0_a}\\
            \text { s.t. } 
            & b^{c,r}_{k}\in\{0,1\}, \forall c\in \mathcal{C}, r\in \mathcal{R}, k\in \mathcal{K}\\
            &\sum\nolimits_{(c,r)} f_{j}^{c,r} \geq Q_{j}, \forall  j  \in  \mathcal{Q}.  \label{KNF0}
        \end{align}
     \end{subequations} 
     Problem \eqref{SRmax-original} is a nonlinear integer programming whose solution space scales exponentially as $\mathcal{O}(2^{K \times C \times R})$ as users and RBGs increase. Besides the scheduling variables $\{b_{k}^{c,r}\}$ to optimize, problem \eqref{SRmax-original} contains the intermediate variables $\{\wb_{k}^{c,r}\}$ dependent $\{b_{k}^{c,r}\}$, as shown from \eqref{bF} and \eqref{normalized}. This strong coupling results in complex rate expressions $\{f^{c,r}_k \}$, rendering this problem neither convex nor concave. Moreover, the scheduling across multi-cell multi-CC involves both JT and NJT, as well as UEs with diverse QoS requirements. These factors pose a great challenge to the design of polynomial-time algorithms capable of achieving even a local optimum.

		
		

 \section{A Novel Problem Reformulation}\label{sec:reformu}  
    To solve problem \eqref{SRmax-original}, we first propose a novel reformulation that is amenable to efficient solutions. In MU-MIMO networks, allocating an RBG to UEs with weak inter-user interference is advantageous, since this typically yields a high SINR \cite{yoo2006optimality}. Motivated by this, we approximate the rate for UE $k \in \mathcal{K}$ as 
    \begin{subequations}
        \begin{align}
        {f}_{k}^{c,r} & = \log(1+ {b}_{k}^{c,r}{\gamma}_{k}^{c,r}), \\
        &\approx \log(b_{k}^{c,r}\gamma_{k}^{c,r}) \triangleq \tilde{f}_{k}^{c,r}.
    \end{align}
    \end{subequations}
    In addition, a key observation is that cell-center UEs usually receive a much stronger desired signal from their serving BS compared to the ICI, while cell-edge UEs experience comparable signal strengths from multiple BSs. 
    This motivates us to apply NJT for the former and JT for the latter. 
     Accordingly, it is reasonable to make the following assumption:
    \begin{assumption} \label{assumption}
		The NJT-UEs do not suffer from ICI, and the JT-UEs are not interfered by non-serving BSs \cite{chen2022m}.
	\end{assumption}
    {Based on Assumption \ref{assumption} and  \eqref{uH}-\eqref{normalized}, it holds that 
   \begin{align}
        (\mathbf{v}_{m,k}^{c,r})^\mathrm{H}\mathbf{w}_{m,j}^{c,r}
        &=
        \begin{cases}
         {\sqrt{P_m/A_m^{c,r}}}/{\| \hat{\mathbf{w}}_{m,k}^{c,r} \|},
        & j = k, \\[6pt]
        0, & j \neq k .
        \end{cases}
    \end{align} 
    and thereby the SINR of UE $k \in \mathcal{K}$ in \eqref{gamma} reduces to
    \begin{align}\label{JT SINR}
        {\gamma}_{k}^{c,r} =    \bigg|\sum\nolimits_{m \in \mathcal{B}_k}{\lambda_{k}^{c,r} \sqrt{P_m/A_m^{c,r}}}/{ \|\hat{\mathbf{w}}_{m,k}^{c,r}\| } \bigg|^2  /{\sigma^2}.
    \end{align}
    Note that $\|\hat{\mathbf{w}}_{m,k}^{c,r} \|$ depends on the scheduling variables of BS $m \in \mathcal{B}_k$ inexplicitly, as shown in \eqref{bF}. Moreover, for JT-UE $k$, \eqref{JT SINR} entails cross-BS coupling (i.e. $| \mathcal{B}_k| > 1$) and is more complicated than that for NJT-UEs (i.e. $| \mathcal{B}_k| = 1$). 
    
    Observing that all the terms within the summation of \eqref{JT SINR} are positive, we apply the inequality $|\sum_m a_m|^2 \geq \sum_m |a_m|^2$ \cite{bian2024decentralizing} to attain a lower bound with a separate structure as
    \begin{align} \label{eq:gamma_JT_s}
        \tilde{\gamma}_{k}^{c,r} \triangleq \sum\nolimits_{m \in \mathcal{B}_i}(\lambda_{k}^{c,r})^2P_m/(A_m^{c,r} \|\hat{\mathbf{w}}_{m,k}^{c,r}\|^2\sigma^2 ).   
    \end{align}
    Now, each term in \eqref{eq:gamma_JT_s} can be characterized uniformly using a per-BS component defined as 
    \begin{align}\label{gamma_m}
        \gamma_{m,k}^{c,r} \triangleq {(\lambda_{k}^{c,r} )^2P_m}/(A_m^{c,r}{\sigma^2 \|\hat{\mathbf{w}}_{m,k}^{c,r}\|^2}),
    \end{align}
    which represents the SINR contribution from BS $m$ to UE $k$. 
    Hence, the SINR for NJT-UE $k$ served by BS $m$ is $\gamma_{k}^{c,r} = \gamma_{m,k}^{c,r}$, while the lower-bound SINR for a JT-UE $k$ becomes  
    $\tilde{\gamma}_{k}^{c,r} \triangleq \sum\nolimits_{m \in \mathcal{B}_i} \gamma_{m,k}^{c,r}$.

    Although the SINR expressions $\{\gamma^{c,r}_{m,k}\}$ have been effectively simplified, the BFs $\{\hat{\mathbf{w}}_{m,k}^{c,r}\} $ in them are still strongly coupled with the scheduling variables. To handle it, we leverage a large antenna array and have the following theorem.}
    \begin{Theorem} \label{Th1}
        Given a sufficiently large $N_t$, the approximate rate of NJT-UE $k$ served by BS $m$ on RBG $(c,r)$ converges:
        \begin{align}\label{JTSINR1}
            \tilde{f}_{k}^{c,r} \approx b_{k}^{c,r}(\psi_{m, k}^{c,r}\! +\!\sum\nolimits_{j \in \!\mathcal{A}_m\setminus\{k\}} \! b_{ j}^{c,r} d_{m, j, k}^{c,r} - g_m^{c,r} ),   
        \end{align}
        where ${\psi}_{m,k}^{c,r}  \triangleq       \log({{({\lambda}}_{k}^{c,r})^2|\mathbf{v}_{m,k}^{c,r} |^2 P_m}/{\sigma^2})$, ${d}_{m,j,k}^{c,r} \triangleq  \log( 1 -  \eta_{m,j,k}^{c,r})$, $\eta_{m,j,k}^{c,r} \triangleq { | (\mathbf{v}_{m,j}^{c,r}) ^\mathrm{H} \mathbf{v}_{m,k}^{c,r} |^2}/({|\mathbf{v}_{m,j}^{c,r}|^2|\mathbf{v}_{m,k}^{c,r}|^2})$
          and $g_m^{c,r} \triangleq \log(A_m^{c,r}) $.
    \end{Theorem}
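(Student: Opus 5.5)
Starting from \eqref{gamma_m}, the approximate rate of NJT-UE $k$ served by BS $m$ is
\begin{align*}
\tilde f_k^{c,r} = \log\big((\lambda_k^{c,r})^2P_m/\sigma^2\big) - \log A_m^{c,r} - \log\|\hat{\mathbf{w}}_{m,k}^{c,r}\|^2
\end{align*}
whenever $b_k^{c,r}=1$, and we set it to zero when $b_k^{c,r}=0$, which accounts for the leading factor $b_k^{c,r}$. The only term still coupled to the scheduling is $\|\hat{\mathbf{w}}_{m,k}^{c,r}\|^2$. The plan is to express it in closed form via \eqref{bF} and then show, as $N_t\to\infty$, that its logarithm becomes additive over the co-scheduled users.

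By \eqref{bF}, $\hat{\mathbf{W}}_m^{c,r}$ is the pseudo-inverse of $(\hat{\mathbf{V}}_m^{c,r})^{\mathrm H}$. Hence $\|\hat{\mathbf{w}}_{m,k}^{c,r}\|^2=[(\hat{\mathbf{V}}_m^{\mathrm H}\hat{\mathbf{V}}_m)^{-1}]_{kk}$. By the standard Schur-complement/projection identity this equals
\begin{align*}
\|\hat{\mathbf{w}}_{m,k}^{c,r}\|^2 = 1/\|\mathbf{P}^{\perp}_{-k}\mathbf{v}_{m,k}^{c,r}\|^2,
\end{align*}
where $\mathbf{P}^{\perp}_{-k}$ is the orthogonal projector onto the complement of the span of the other scheduled vectors $\{\mathbf{v}_{m,j}^{c,r}: j\in\mathcal{A}_m\setminus\{k\},\ b_j^{c,r}=1\}$. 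Therefore
\begin{align*}
-\log\|\hat{\mathbf{w}}_{m,k}^{c,r}\|^2=\log|\mathbf{v}_{m,k}^{c,r}|^2+\log\big(\|\mathbf{P}^{\perp}_{-k}\mathbf{v}_{m,k}^{c,r}\|^2/|\mathbf{v}_{m,k}^{c,r}|^2\big).
\end{align*}
The first term joins $\psi_{m,k}^{c,r}$. The second term is the log of the residual fraction of energy after projection.

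The key, and hardest, step is the massive-MIMO approximation. For large $N_t$, singular vectors of distinct users become asymptotically orthogonal, so the Gram matrix of the interferers $\{\mathbf{v}_{m,j}^{c,r}\}_{j\neq k}$ tends to a diagonal matrix. The projector then decomposes as
\begin{align*}
\mathbf{P}^{\perp}_{-k}\approx\prod_{j}\big(\mathbf{I}-\mathbf{v}_{m,j}^{c,r}(\mathbf{v}_{m,j}^{c,r})^{\mathrm H}/|\mathbf{v}_{m,j}^{c,r}|^2\big),
\end{align*}
with cross terms of order $\eta_{m,j,k}\eta_{m,j',k}$ or smaller. From this product form,
\begin{align*}
\|\mathbf{P}^{\perp}_{-k}\mathbf{v}_{m,k}^{c,r}\|^2/|\mathbf{v}_{m,k}^{c,r}|^2\approx\prod_{j}(1-\eta_{m,j,k}^{c,r}).
\end{align*}
I would justify this by a first-order expansion in the small off-diagonal correlations, checking that $1-\sum_j\eta_{m,j,k}$ and $\prod_j(1-\eta_{m,j,k})$ agree to first order. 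The rigour here is heuristic: the paper's ``converges'' should be read as the cross correlations vanishing as $N_t$ grows, for example for i.i.d.\ or Kronecker-type channels.

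Taking logarithms turns the product into $\sum_{j\in\mathcal{A}_m\setminus\{k\}} b_j^{c,r}\log(1-\eta_{m,j,k}^{c,r})=\sum_j b_j^{c,r}d_{m,j,k}^{c,r}$, where the indicator $b_j^{c,r}$ selects exactly the co-scheduled users. Collecting the three pieces, namely $\psi_{m,k}^{c,r}$, the sum of $b_j^{c,r}d_{m,j,k}^{c,r}$, and $-g_m^{c,r}$, and multiplying by $b_k^{c,r}$ yields \eqref{JTSINR1}.
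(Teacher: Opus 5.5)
Your proposal follows essentially the paper's route: you write $\|\hat{\mathbf{w}}_{m,k}^{c,r}\|^2$ as a diagonal entry of $(\hat{\mathbf{V}}_m^{\mathrm H}\hat{\mathbf{V}}_m)^{-1}$ and invert the Schur complement (your projection residual $\|\mathbf{P}^{\perp}_{-k}\mathbf{v}_{m,k}^{c,r}\|^2$ is the paper's expression), use massive-MIMO asymptotics to treat the interferers' Gram matrix as diagonal so the residual becomes $|\mathbf{v}_{m,k}^{c,r}|^2(1-\sum_j b_j^{c,r}\eta_{m,j,k}^{c,r})$, replace $1-\sum_j x_j$ by $\prod_j(1-x_j)$ to first order, and take logarithms. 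One caveat is that the product-of-projectors heuristic does not by itself give $\prod_j(1-\eta_{m,j,k}^{c,r})$: for exactly orthogonal interferers it gives exactly $1-\sum_j\eta_{m,j,k}^{c,r}$, and its operator-level cross terms scale with interferer-to-interferer correlations rather than with $\eta_{m,j,k}^{c,r}\eta_{m,j',k}^{c,r}$; however, the first-order justification you fall back on is precisely the paper's step, so the argument holds at the same heuristic level as the paper's.
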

    \begin{proof}
        As the derivations of the SINR on different RBGs are similar, we omit the superscript $(c,r)$ in the following for ease of illustration.  
	       First, based on the generation of EZF-BF in \eqref{bF}, we have $\hat{\mathbf{W}}_m^\mathrm{H} \hat{\mathbf{W}}_m = (  \hat{\mathbf{V}}_m ^\mathrm{H} \hat{\mathbf{V}}_m )^{-1}$. W.l.o.g., assume the BF $\hat \wb_{m,k}$ of the UE $k$ takes the $j$-th column of $\hat{\mathbf{W}}_m$ in \eqref{bF}, then
    \begin{equation}\label{eq:ww}
        \|\hat{\mathbf{w}}_{m,k} \|_2^2 = (  \hat{\mathbf{V}}_m ^\mathrm{H} \hat{\mathbf{V}}_m)_{j,j}^{-1}.
    \end{equation}
	Denote  $\mathbf{\hat{V}}_{m,-j}$ as the sub-matrix of $\mathbf{\hat{V}}_m$ without the $j$-th column $\mathbf{v}_{m,k}$. By adopting the block matrix inversion formula \cite{li2015multicell}, $ (\hat{\mathbf{V}}_m ^\mathrm{H} \hat{\mathbf{V}}_m)_{j,j}^{-1}$ is further expanded as
    \begin{align} \label{eq:invV}
        \!\!\!\!\frac{1}{\mathbf{v}_{m,k} ^\mathrm{H}\mathbf{v}_{m,k} \!- \!\mathbf{v}_{m,k} ^\mathrm{H} \hat{\mathbf{V}}_{m,-j} ( \hat{\mathbf{V}}_{m,-j}^\mathrm{H} \hat{\mathbf{V}}_{m,-j} )^{-1} \hat{\mathbf{V}}_{m,-j} ^\mathrm{H} \mathbf{v}_{m,k}}. 
    \end{align}
	Combining \eqref{eq:invV} with \eqref{eq:ww},  $1/{\|\hat{\mathbf{w}}_{m,k} \|_2^2}$ equals
	\begin{align}\label{eq:inv_ww}
			\!\!\!\! \mathbf{v}_{m,k} ^\mathrm{H}\mathbf{v}_{m,k} \!- \!\mathbf{v}_{m,k} ^\mathrm{H} \hat{\mathbf{V}}_{m,-j} ( \hat{\mathbf{V}}_{m,-j}^\mathrm{H} \hat{\mathbf{V}}_{m,-j} )^{-1} \hat{\mathbf{V}}_{m,-j} ^\mathrm{H} \mathbf{v}_{m,k}.
	\end{align}
	Moreover, it can be observed from \eqref{eq:inv_ww} that the diagonal entries of $ \hat{\mathbf{V}}_{m,-j} ^\mathrm{H} \hat{\mathbf{V}}_{m,-j} $ equal $\mathbf{v}_{m,z} ^\mathrm{H}\mathbf{v}_{m,z}, z \neq k$, while the non-diagonal entries correspond to the inter-product of the right singular vectors between different UEs on the same RBG. According to the asymptotics of massive MIMO \cite{vershynin2009high}, when the number of transmitting antennas tends to infinity, $\mathbf{v}_{m,z} ^\mathrm{H}\mathbf{v}_{m,k} \rightarrow 0$ holds for any two users $z\neq k$ with different channels. This implies $ \hat{\mathbf{V}}_{m,-j} ^\mathrm{H} \hat{\mathbf{V}}_{m,-j}  $ is close to a diagonal matrix for a large $N_t$, and \eqref{eq:inv_ww} can be approximated as 
	\begin{align} \label{eq:app_diag}
			& \mathbf{v}_{m,k} ^\mathrm{H}\mathbf{v}_{m,k}  - \sum\nolimits_{\substack{z \in \mathcal{A}_m, z \neq k}} {b_{z} | \mathbf{v}_{m,z} ^\mathrm{H} \mathbf{v}_{m,k} |^2}/{|\mathbf{v}_{m,z}|^2}  \\
			& = |\mathbf{v}_{m,k}|^2\bigg(1 \! -\!\!\! \sum_{\substack{z \in \mathcal{A}_m, z \neq k}} \!\! {b_{z} | \mathbf{v}_{m,z} ^\mathrm{H} \mathbf{v}_{m,k} |^2}/({|\mathbf{v}_{m,z}|^2|\mathbf{v}_{m,k}|^2})\bigg).\notag
	\end{align}
	Moreover, as $1-\sum_{i=1}^n x_i \approx \prod_{i=1}^n\left(1-x_i\right)$ holds when $x_i \approx 0$, we have
	\begin{align} \label{eq:prod}
			&1  - \sum\nolimits_{ {z \in \mathcal{A}_m, z \neq k}}  {b_{z} | \mathbf{v}_{m,z} ^\mathrm{H} \mathbf{v}_{m,k} |^2}/({|\mathbf{v}_{m,z}|^2|\mathbf{v}_{m,k}|^2})  \\
			&\approx \prod\nolimits_{ {z \in \mathcal{A}_m, z \neq k}} \left( 1 \!-\!  {b_{z} | \mathbf{v}_{m,z} ^\mathrm{H} \mathbf{v}_{m,k} |^2}/({|\mathbf{v}_{m,z}|^2|\mathbf{v}_{m,k}|^2} )\right)  \notag.
	\end{align}
	Substituting \eqref{eq:prod} and \eqref{eq:app_diag}  into \eqref{gamma_m} obtains the \eqref{JTSINR1}.
    \end{proof}

        

    Theorem \ref{Th1} successfully removes the intermediate BF variables and the attained $\tilde{f}_{k}^{c,r}$ solely relies on the scheduling variables in the serving cell $m$; therefore, we introduce  $\tilde{f}_{m,k}^{c,r}\triangleq \tilde{f}_{k}^{c,r}$. Moreover, it reveals the interplay among the scheduling variables of different UEs clearly. From \eqref{JTSINR1}, we observe that the rate is influenced by the following three components:
	\begin{enumerate}
		\item ${\psi}_{m,k}^{c,r}$, corresponds to the rate contribution from the SNR of UE $k$ when scheduled exclusively on RBG $(c,r)$;
		
		\item $\sum\nolimits_{j \in \mathcal{A}_m\setminus\{k\}} \! b_{j}^{c,r} d_{m, j, k}^{c,r}$, where $d_{m, j, k}^{c,r}$ is negative, quantifies the loss due to interference from other UEs with correlated channels (i.e., co-channel interference);
		
		\item $g_{m}^{c,r}$, captures the impact of power sharing among co-scheduled UEs.
	\end{enumerate}
    
    For JT-UE $i \in \mathcal{U}_m$, similarly we have
    \begin{align}
    \tilde{f}_{i}^{c,r} = \log\big(b_{i}^{c,r}  \tilde{\gamma}_{i}^{c,r}\big) = \log(b_{i}^{c,r}\sum\nolimits_{m\in \mathcal{B}_i} {\gamma}_{m,i}^{c,r}),
    \end{align}
    where the logarithmic function couples the SINR terms from multiple BSs, making the expression non-separable across the serving BSs. To handle it, we further apply Jensen’s inequality to obtain a separable approximate lower bound of $\tilde{f}_i^{c,r}$ as 
    \begin{align} \label{rate1}
        &\sum\nolimits_{m\in \mathcal{B}_i}\tilde{f}_{m,i}^{c,r} \triangleq \sum\nolimits_{m\in \mathcal{B}_i}\log(b_{i}^{c,r}|\mathcal{B}_i|{\gamma}_{m,i}^{c,r})/|\mathcal{B}_i|,  \\
        &= \sum_{m\in \mathcal{B}_i}b_{i}^{c,r}(\tilde{\psi}_{m, i}^{c,r}\! +\!\sum\nolimits_{j \in \!\mathcal{A}_m\setminus\{i\}} \! b_{j}^{c,r} d_{m, j, i}^{c,r} - g_m^{c,r} )/|\mathcal{B}_i|,  \nonumber
    \end{align}
    where $\tilde{\psi}_{m,i}^{c,r} = \log(|\mathcal{B}_i|)+{\psi}_{m,i}^{c,r}$. Following the above derivations, one can observe that the multiplicative structure has been transformed into a more tractable summation form.

\section{Proposed Scheduling Algorithm}
    The reformulation in the previous section enables us to develop a simple algorithm that can achieve high scheduling performance. Specifically, by \eqref{JTSINR1} and \eqref{rate1}, we approximate problem \eqref{SRmax-original} by the following problem with a separable objective and constraints 
    \begin{subequations}\label{SRmax_overall}
        \begin{align}
                \max\nolimits_{\{b^{c,r}_{k}\}}  & \sum\nolimits_{(c,r)}\sum\nolimits_{k \in \mathcal{\hat{Q}}} \sum\nolimits_{m \in \mathcal{B}_k}  \tilde{f}_{m, k}^{c,r} \label{SRmax2} \\
                \text { s.t. } & b^{c,r}_{k}\in\{0,1\}, \forall c\in \mathcal{C}, r\in \mathcal{R}, k\in \mathcal{K}\\
                & \sum\nolimits_{(c,r)} \sum\nolimits_{m \in \mathcal{B}_j}\tilde{f}_{ m,j}^{c,r} \geq Q_{j},  \forall  j  \in \mathcal{Q}, \label{KNF}
        \end{align}
    \end{subequations}
    
    \noindent which remains a constrained non-convex problem. To tackle the constraints, we apply a \textit{min}-based penalty to the rate QoS constraints \eqref{KNF}, reformulating \eqref{SRmax_overall} as
     \begin{subequations}\label{minpenalty}
     \begin{align} 
        \max\nolimits_{\{b^{c,r}_{k}\}}& \ G \triangleq  \sum\nolimits_{(c,r)}\sum\nolimits_{k \in \mathcal{\hat{Q}}} \sum\nolimits_{m \in \mathcal{B}_k}  \tilde{f}_{m, k}^{c,r} \\ 
        &\!\!\!\!\!\!\!\!+ \rho\sum\nolimits_{k \in \mathcal{Q}} \min{\{\sum\nolimits_{(c,r)} \sum\nolimits_{m \in \mathcal{B}_j}\tilde{f}_{ m,j}^{c,r}, Q_{j}\}}, \notag\\
        \text { s.t. } & b^{c,r}_{k}\in\{0,1\}, \forall c\in \mathcal{C}, r\in \mathcal{R}, k\in \mathcal{K}
     \end{align}
     \end{subequations}
     where $\rho$ is a penalty parameter. For the above transformation, the following theorem holds true.
    \begin{Theorem} 
        There exists an $\rho^*>0$, when $\rho>\rho^* $, the local optimal solution of the above problem \eqref{minpenalty} is also the local optimal solution of the constraint problem \eqref{SRmax_overall} \cite{nocedal1999numerical}.
    \end{Theorem}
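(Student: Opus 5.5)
The plan is to treat \eqref{minpenalty} as an exact-penalty reformulation of \eqref{SRmax_overall} and reduce the claim to the classical exact $\ell_1$-penalty result \cite{nocedal1999numerical}. First I rewrite the penalty term. Since $\min\{F_j,Q_j\} = Q_j - \max\{0,\, Q_j - F_j\}$, where $F_j(\mathbf{b}) \triangleq \sum_{(c,r)}\sum_{m\in\mathcal{B}_j}\tilde{f}^{c,r}_{m,j}$, we get
\begin{align*}
G(\mathbf{b}) = \Phi(\mathbf{b}) + \rho\sum\nolimits_{j\in\mathcal{Q}} Q_j - \rho\sum\nolimits_{j\in\mathcal{Q}}\max\{0, Q_j - F_j(\mathbf{b})\}.
\end{align*}
Here $\Phi$ is the objective \eqref{SRmax2}. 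Up to a constant, \eqref{minpenalty} is therefore the $\ell_1$ exact penalty function of \eqref{SRmax_overall}, with the binary set $\{0,1\}^{K\times C\times R}$ kept as a hard constraint.

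The notion of local optimality has to be fixed before going further. On a binary domain I take a local optimum to mean a point that cannot be improved within a neighbourhood $\mathcal{N}(\mathbf{b})$, for example all points at Hamming distance one; this is the notion the BCD algorithm uses. The argument then runs in three steps.

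\textbf{(i) Finiteness.} The feasible domain is finite, so $\Phi$ and each $F_j$ take finitely many values. Hence
\begin{align*}
\Delta \triangleq \max_{\mathbf{b},\mathbf{b}'}|\Phi(\mathbf{b})-\Phi(\mathbf{b}')| < \infty.
\end{align*}
Let $\delta>0$ be the smallest positive value of $\sum_j \max\{0,Q_j-F_j(\mathbf{b})\}$ over all infeasible $\mathbf{b}$. This minimum exists and is positive because the set is finite.

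\textbf{(ii) Local optima of \eqref{minpenalty} are feasible.} Take $\rho^* \triangleq \Delta/\delta$ and $\rho>\rho^*$. Let $\mathbf{b}^\star$ be a local maximizer of $G$ whose neighbourhood contains a feasible point $\mathbf{b}'$ of \eqref{SRmax_overall}. If $\mathbf{b}^\star$ were infeasible, then
\begin{align*}
G(\mathbf{b}') - G(\mathbf{b}^\star) \geq -\Delta + \rho\delta > 0,
\end{align*}
which contradicts local optimality. So $\mathbf{b}^\star$ is feasible.

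\textbf{(iii) Local optimality transfers.} For feasible points the penalty term vanishes, so $G = \Phi + \text{const}$ on the feasible set. A feasible $\mathbf{b}^\star$ that is locally maximal for $G$ on $\mathcal{N}(\mathbf{b}^\star)$ is then locally maximal for $\Phi$ on $\mathcal{N}(\mathbf{b}^\star)\cap\{\text{feasible}\}$. That is exactly local optimality for \eqref{SRmax_overall}.

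The main obstacle is step (ii). Exactness needs a feasible point to be reachable within the local neighbourhood, which is where a constraint qualification enters in the continuous theory. I would first try to state this as an explicit assumption: every infeasible point has a feasible point in its neighbourhood, or, more strongly, the neighbourhood is taken to be the whole finite domain. An alternative route is a continuous relaxation $b\in[0,1]$. There I would invoke the exact-penalty theorem of \cite{nocedal1999numerical} directly: under a constraint qualification such as LICQ or MFCQ at the local solution, any $\rho$ exceeding the largest Lagrange multiplier $\|\boldsymbol{\lambda}^\star\|_\infty$ makes the $\ell_1$ penalty exact. In that case $\rho^*$ is identified with that multiplier bound.
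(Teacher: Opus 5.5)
Your main argument is correct and takes a different route from the paper. The paper gives no derivation beyond citing the continuous exact-penalty theory of \cite{nocedal1999numerical}. That theory assumes a smooth program over a continuum, needs a constraint qualification, and yields a threshold $\|\lambda^\star\|_\infty$ tied to the multipliers of one particular solution. None of this transfers directly to $\{0,1\}^{K\times C\times R}$. There every point is isolated, so ``local optimum'' is vacuous until a combinatorial neighbourhood is fixed, as you do.

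Your identity $\min\{F_j,Q_j\}=Q_j-\max\{0,Q_j-F_j\}$, the constants $\Delta,\delta$, and steps (i)--(iii) are sound. Step (i) needs all $\eta^{c,r}_{m,j,k}<1$ so that every $\tilde f$ is finite. In exchange you get an explicit threshold $\rho^*=\Delta/\delta$ that is uniform over all local optima and matched to the single-flip fixed points of the BCD scheme. The citation buys only brevity. Note also that step (iii) holds for every $\rho\ge 0$; the threshold is used only to force feasibility in step (ii).

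The reachability hypothesis you flag in step (ii) is not a weakness of your proof. It is a necessary repair, because the statement is false as written. If \eqref{SRmax_overall} is infeasible, the global maximizer of \eqref{minpenalty} is a local optimum that is infeasible. More generally, suppose $\mathbf b^0$ is infeasible and strictly minimizes $V(\mathbf b)=\sum_{j\in\mathcal{Q}}\max\{0,Q_j-F_j(\mathbf b)\}$ over its neighbourhood. Then $G(\mathbf b^0)-G(\mathbf b)\ge -\Delta+\rho\,(V(\mathbf b)-V(\mathbf b^0))>0$ there for all large $\rho$. This is the discrete analogue of the infeasible stationary points that \cite{nocedal1999numerical} explicitly allows.

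So either make the hypothesis part of the theorem, together with a nonempty feasible set, or weaken the conclusion. The weakened form is: every feasible local maximizer of \eqref{minpenalty} is a local maximizer of \eqref{SRmax_overall}, and for $\rho>\rho^*$ every local maximizer with a feasible neighbour is feasible.

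Drop your fallback continuous-relaxation route, for three reasons:
\begin{itemize}
\item The $\rho>\|\lambda^\star\|_\infty$ result goes from a KKT local solution of the constrained problem to a local minimizer of the penalty, which is the opposite of the claimed implication.
\item Its converse in \cite{nocedal1999numerical} holds only for penalty stationary points that are already feasible.
\item Local optima of the $[0,1]$ relaxation need not correspond to Hamming-local optima of the binary problem.
\end{itemize}
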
 

    \begin{figure}[t]
		\begin{algorithm}[H]
				\caption{Proposed Scheduling Algorithm}
				\begin{algorithmic}[1] \label{algor1}
					\STATE Initialize $\{b_{k}^{c,r}\}$; 
					\FOR{each UE $t$}
					\FOR{each RBG $(c,r)$}
					\IF{$\text{gain}[c,r,t] > 0$}
					\STATE  $b_{t}^{c,r} = 1$; 
					\ELSE
					\STATE $b_{t}^{c,r} = 0$; %
					\ENDIF
					\ENDFOR
					\ENDFOR
					\STATE Repeat Steps 2-10 until maximum iterations reached. 
				\end{algorithmic}
		\end{algorithm}
	\end{figure}

    So far, problem \eqref{minpenalty} has been as a problem over binary variables. One way to solve it is to relax the binaries to continuous values within $[0, 1]$ and add a negative $\ell_2$-norm regularizer \cite{shao2019framework}. However, this approach will introduce more parameters to tune and may yield infeasible solutions after rounding. Thus, we choose to solve \eqref{minpenalty} directly by using the BCD method, as it ensures that the iterative updates produce a non-decreasing sequence of objective values and converge to a local optimum. Specifically, for each $b^{c,r}_{k}$, with other scheduling variables fixed, the gain of assigning UE $k$ to RBG $(c,r)$ is defined as
    \begin{align} \label{eq:gain1}
	\text{gain}[c,r,k] \triangleq G(b^{c,r}_{k}=1) - G(b^{c,r}_{k}=0).
    \end{align}
    If the gain is positive, we set $b^{c,r}_k = 1$; otherwise, $b^{c,r}_k = 0$. By applying this process iteratively to all variables, we obtain a computation-efficient scheduling algorithm, which is summarized in Algorithm~\ref{algor1}.

    \section{Simulation Results} \label{sec:simu}

    In the simulation, we consider a 3-cell network, each deploying one BS located at $[0, -300, 25]$, $[-1000, -300, 25]$, and $[-500, -1200, 25]$ (m). $K$ UEs are randomly located in the area $[-1400, 400]\times[-1400, -100]$ (m), each with height 1.5m and $N_r=4$ antennas.
    One-third of UEs are randomly selected and imposed with rate constraints.
    The spectrum consists of three CCs centered at 3.2, 3.5, and 3.8 GHz, each containing 624 subcarriers grouped into 13 RBGs of 48 subcarriers.  The noise spectral density is set to $-174$dBm/Hz and $P_m = 10 \text{dBm}$, while the channels are generated by Quadriga\cite{jaeckel2014quadriga}. For the BS-UE association, we apply the channel-strength strategy \cite{tanbourgi2014analysis}: each UE connects to the BS with the strongest large-scale gain, while secondary links for JT are added if the gap of their channel gain to the strongest one is within $\beta = 5$~dB, unless stated otherwise.
    Performance is evaluated with two metrics: the effective sum rate {\bf{(ESR)}}  $G_{esr}$, which takes $G$ defined in \eqref{minpenalty} with $\rho = 1$ and $\{\tilde f_{m,k}^{c,r}, \tilde f_{m,j}^{c,r}\}$ replaced by the true rates in \eqref{realrate} without approximation; and the  QoS satisfaction rate {\bf{$\mathrm{\textbf{Sat}}$}}, corresponding to the ratio of UEs whose QoS requirements are satisfied.

   Fig. \ref{Centralized Algorithm} shows the objective value of problem \eqref{minpenalty} achieved by the proposed scheduling algorithm across iterations, with $K=45$. Specifically, \( f_{{a}} \triangleq G/ (SCK) \), with $G$ expressed in \eqref{minpenalty} with $\rho =1$, corresponds to the averaged approximate ESR per RBG per UE, while \( f_{{t}} \triangleq G_{\text{esr}} / (SCK) \) is the attained averaged ESR without approximation. Fig. \ref{Centralized Algorithm} presents that the value of  \( f_{{a}} \) increases rapidly and gets stable within about 5 iterations, indicating the fast convergence of our proposed design. Moreover, the trend of \( f_t\) follows that of $f_a$ closely. The relative error between them is consistently below 3\% whenever the \( N_t \) is 32 or 64, which confirms the effectiveness of our proposed approximation in Sec. \ref{sec:reformu}.

	\begin{figure}[t] 
        \centering{\includegraphics[width=0.46\textwidth]{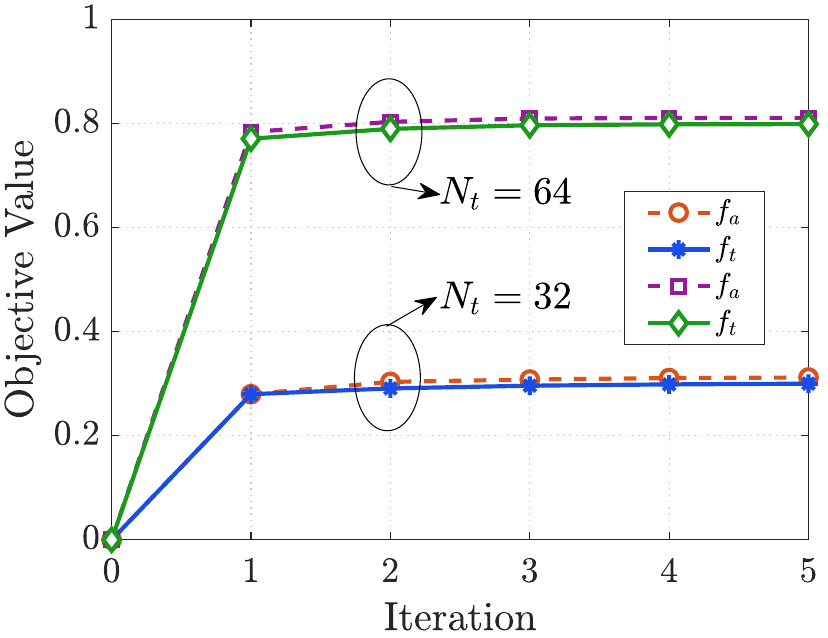}}	
        \caption{Convergence of the proposed scheduling algorithm. }
        \label{Centralized Algorithm}
    \end{figure}
    \begin{figure}[t] 
		\centering {\includegraphics[width=0.46\textwidth]{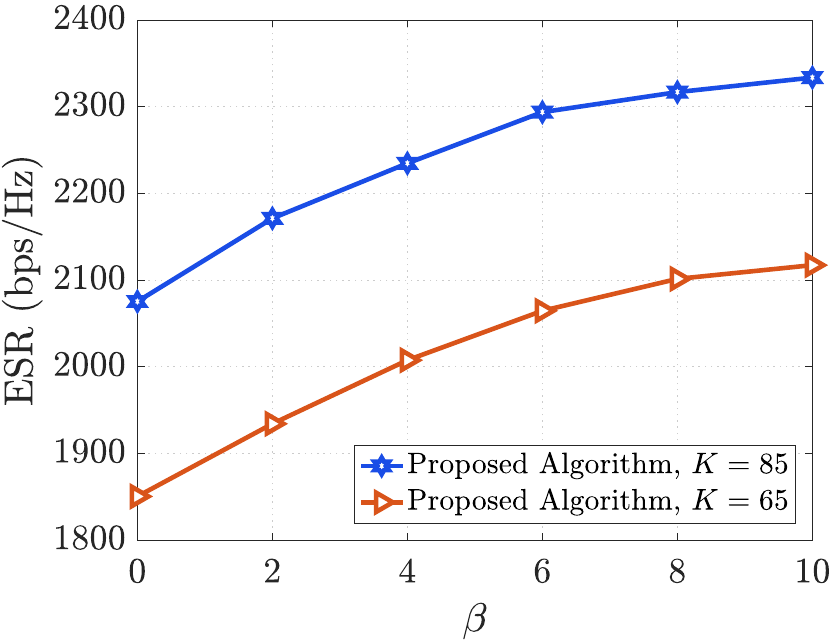}}	
		\caption{ESR versus the association threshold of JT-UEs,  $N_t = 64$.}
		\label{Centralized Algorithm-1}
	\end{figure} 
    \begin{figure}[t] 
        \centering{\includegraphics[width=0.46\textwidth]{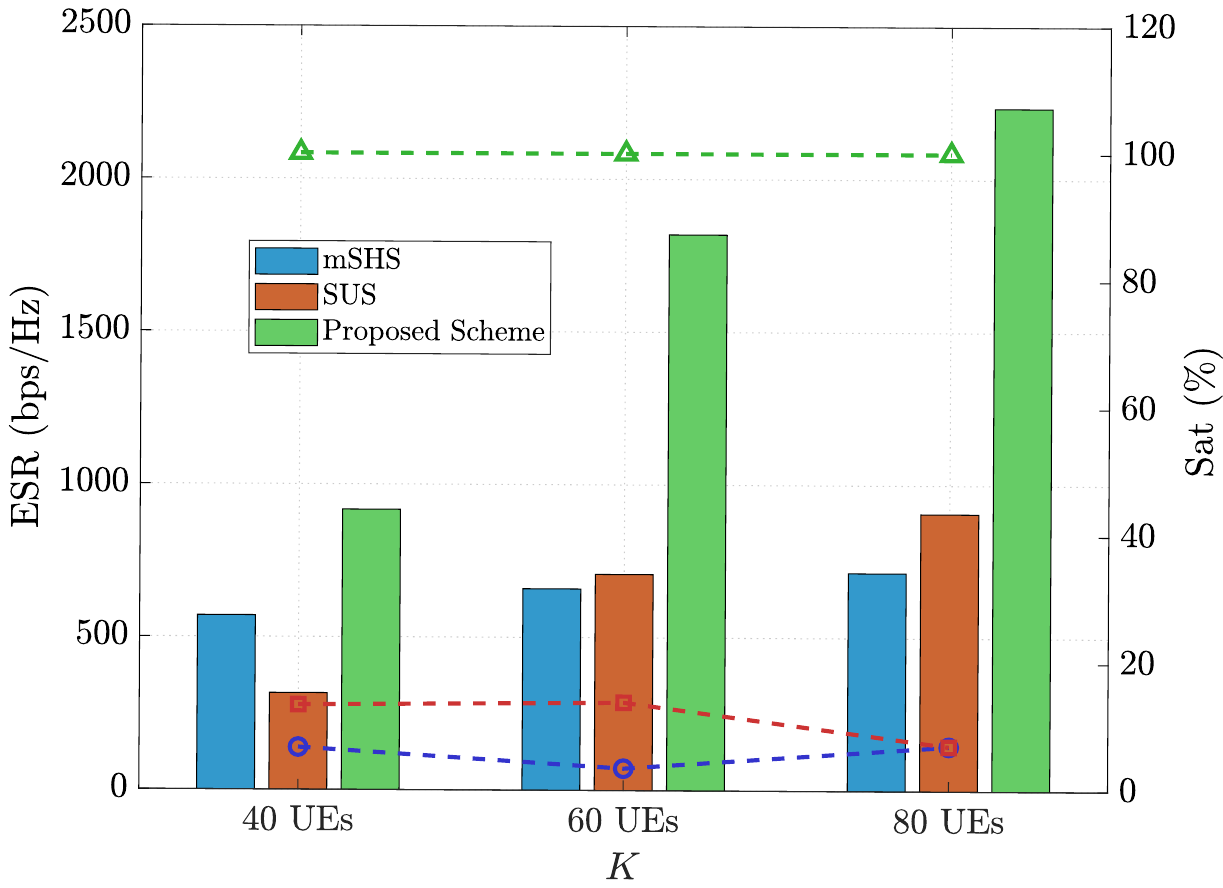}}
        \caption{ESR (bar charts) and $\mathrm{Sat}$ (dashed-line-connected symbols) achieved by different schemes versus different numbers of UEs $K$, $N_t = 64$.} 
        \label{comp}
    \end{figure}  

    Fig.~\ref{Centralized Algorithm-1} examines the impact of $\beta$ on the ESR achieved by the proposed scheme for two different UE configurations: $K=65$ and $K=85$. With a larger $\beta$, UEs are more likely to be served by more than one BS, and the ratio of JT-UEs will increase. At $\beta = 0\ \text{dB}$, every UE connects exclusively to a single BS that offers the strongest gain, and there is no JT-UE in the network.
    The ESR in Fig. \ref{Centralized Algorithm-1} is lowest at $\beta = 0\ \text{dB}$ due to the absence of JT, which results in high ICI. For both $K=65$ and $K=85$, we observe a similar trend where ESR increases with $\beta$, albeit with different absolute values. The curve for $K=85$ is consistently higher than that for $K=65$, indicating that having more UEs in the system leads to higher ESR across all values of $\beta$.
    As $\beta$ rises, more UEs become eligible for JT, which transforms the detrimental inter-cell interference into useful signals, and thus improves the throughput by exploiting spatial diversity and mitigating ICI effectively with our proposed scheduling. The most significant gains occur for $\beta \in [2, 6]\ \text{dB}$ for both configurations. Beyond $6\ \text{dB}$, further increases yield diminishing throughput improvement, as adding highly asymmetric links with weaker BSs contributes marginally. 

    In Fig. \ref{comp}, our proposed design is compared against two existing schemes: \textbf{SUS} \cite{yoo2006optimality} with EZF-BF and a modified SINR-based heuristic scheme (\textbf{mSHS}), which modifies SHS \cite{idrees2022throughput} by weighting the user selection according to individual QoS demands. One can observe that our proposed design achieves the highest ESR while maintaining 100\% Sat. In contrast, both mSHS and SUS exhibit significant performance loss. The mSHS, while improving upon SHS by introducing QoS-aware weighting, still suffers from low ESR and Sat. This is largely due to its conservative single-UE-per-RBG allocation, which restricts multi-user spatial multiplexing and leads to inefficient resource utilization. Additionally, though SUS achieves higher Sat and ESR than mSHS for $K = 60$ and $80$, its performance remains far below that of our proposed scheme. This is because SUS schedules UEs indiscriminately without effectively accounting for their actual QoS requirements. Overall, the proposed design brings substantial performance improvements, offering a highly competitive solution for multi-cell MU-MIMO scheduling.

    \section{Conclusion}

    In this work, we studied the cross-carrier cross-cell scheduling in multi-user MIMO networks, including both JT-UEs and NJT-UEs. To handle the challenging throughput maximization problem subject to rate constraints, we have proposed a novel reformulation scheme by integrating EZF-BF and massive MIMO asymptotic analysis. This effectively eliminated the intermediate BF variables,  achieving an approximate problem with a separable and tractable structure. For this reformulated problem, we developed an efficient penalty-based block coordinate descent scheduling algorithm. In the simulation, the effectiveness of our proposed reformulation and the impact of JT on the throughput were examined. Moreover, our proposed algorithm achieved distinct performance improvement over benchmark schemes, offering a competitive scheduling solution for next-G wireless networks.

    \bibliographystyle{IEEEtran}
	\bibliography{refs_msbf}
\end{document}